%
%
%
%

\documentclass{sig-alternate}
\pagenumbering{arabic}
\setlength{\paperheight}{11in}
\toappear{} 

\usepackage{latexsym}
\usepackage{epsfig}
\usepackage{cite}
\usepackage{amssymb}
\usepackage{amsmath}
\usepackage{psfrag}
\usepackage{verbatim}
\usepackage[colorlinks,urlcolor=blue,citecolor=blue,linkcolor=blue]{hyperref}
\usepackage{url}
\usepackage{graphicx}
\usepackage{epstopdf}
\graphicspath{{.}{./figures/}}
\usepackage[font=bf]{caption}
\DeclareCaptionType{copyrightbox}
\usepackage{subfig}
\usepackage{tabularx}
\usepackage{array}
\newcolumntype{U}{>{\raggedright\arraybackslash}X}
\newcolumntype{Y}{>{\raggedleft\arraybackslash}X}
\usepackage{wrapfig}
\toappear{}

\usepackage{paralist}
\usepackage{booktabs}
\usepackage{textcase}

\newtheorem{theorem}{Theorem}[section]

\newtheorem{lemma}[theorem]{Lemma}

\newtheorem{claim}[theorem]{Claim}

\newcommand{\algn}[1]{\textsc{\MakeTextLowercase{#1}}}

\newcommand{\SecLabel}[1]{\hyperref[sec:#1]{\textbf{Section \ref*{sec:#1}}.}} 
\newcommand{\Sec}[1]{\hyperref[sec:#1]{\S\ref*{sec:#1}}} 
\newcommand{\Eqn}[1]{\hyperref[eq:#1]{(\ref*{eq:#1})}} 
\newcommand{\Fig}[1]{\hyperref[fig:#1]{Figure~\ref*{fig:#1}}} 
\newcommand{\Tab}[1]{\hyperref[tab:#1]{Table~\ref*{tab:#1}}} 
\newcommand{\Thm}[1]{\hyperref[thm:#1]{Theorem~\ref*{thm:#1}}} 
\newcommand{\Lem}[1]{\hyperref[lem:#1]{Lem.\,\ref*{lem:#1}}} 
\newcommand{\Prop}[1]{\hyperref[prop:#1]{Prop.~\ref*{prop:#1}}} 
\newcommand{\Cor}[1]{\hyperref[cor:#1]{Cor.~\ref*{cor:#1}}} 
\newcommand{\Def}[1]{\hyperref[def:#1]{Defn.~\ref*{def:#1}}} 
\newcommand{\Alg}[1]{\hyperref[alg:#1]{Alg.~\ref*{alg:#1}}} 
\newcommand{\Ex}[1]{\hyperref[ex:#1]{Ex.~\ref*{ex:#1}}} 
\newcommand{\Clm}[1]{\hyperref[clm:#1]{Claim~\ref*{clm:#1}}} 



\newcommand{\dm}{d_{max}}
\DeclareMathOperator{\vol}{\texttt{vol}}
\DeclareMathOperator{\cut}{\texttt{cut}}
\DeclareMathOperator{\edges}{\texttt{edges}}

\newcommand{\EX}{\hbox{\bf E}}

\newcommand{\ball}[2]{N_#1(#2)}

\begin{document}

\title{Neighborhoods are good communities}

\numberofauthors{2} 
%
\author{
\alignauthor
David F. Gleich\\
       \affaddr{Purdue University}\\
       \affaddr{Computer Science Department}\\
       \email{dgleich@purdue.edu}
\alignauthor
C. Seshadhri\thanks{The author is supported by the Sandia LDRD program (under
project 158477) and the applied mathematics program at the United
States Department of Energy.}\\
       \affaddr{Sandia National Laboratories}\thanks{Sandia National Laboratories is a multi-program laboratory managed and operated by Sandia Corporation, a wholly owned subsidiary of Lockheed Martin Corporation, for the U.S. Department of Energy's National Nuclear Security Administration under contract DE-AC04-94AL85000.}\\
       \affaddr{Livermore, CA}\\
       \email{scomand@sandia.gov}
}

\date{}

\maketitle
\begin{abstract}
The communities of a social network are sets of vertices
with more connections inside the set than outside.  
We theoretically demonstrate that two commonly
observed properties of social networks, heavy-tailed
degree distributions and large clustering coefficients,
imply the existence of vertex neighborhoods (also known
as egonets) that are themselves good communities.  We
evaluate these neighborhood communities on a range of graphs.
What we find is that the neighborhood communities often
exhibit conductance scores that are as good as the Fiedler
cut.  Also, the conductance of neighborhood communities 
shows similar behavior as the network community profile
computed with a personalized PageRank community detection
method.  The latter requires sweeping over a great many
starting vertices, which can be expensive.  By using a small and 
easy-to-compute set of neighborhood communities
as seeds for these PageRank communities, however, we find communities 
that precisely capture the behavior of the network community profile
when seeded everywhere in the graph, and at a significant reduction
in total work. 
\end{abstract}

\category{I.5.3}{Pattern Recognition}{Clustering}[Algorithms]

\terms{Algorithms, Theory}

\keywords{clustering coefficients,  triangles,
egonets, conductance } 

\section{Introduction}

\emph{Community detection}, loosely speaking, is any
process that takes a graph or network and picks out
sets of related nodes.  An incredibly variety of 
techniques exist for this single task, which
has a variety of names as well: community detection,
graph clustering, and graph partitioning.  Throughout
this manuscript, we shall use the term community
and cluster interchangeably.  For more
information about approaches for this problem,
see the recent survey by Schaffer~\cite{Schaeffer-2007-clustering}.
In many techniques, a community is defined as a set with
a good score under a quality measure that reflects the
connectivity between the set and the rest of the network.  
Common measures are based on density of local edges,
deviance from a random null model,
 the behavior of random walks, or graph cuts.
Mostly, these measures are NP-hard to optimize.

To keep this manuscript simple, we shall evaluate communities
using their \emph{conductance store}.  Schaeffer identified
this measure as one of the most important cut-based measures
and it has been studied extensively in a variety of 
disciplines~\cite{Chung-1992-book,Shi2000-normalized-cuts,Kannan-2004-clusterings}.
Work by Leskovec et al.~has recently demonstrated that, although different quality measures produce
differences in terms of specific communities, strong communities
persist under a variety of measures~\cite{Leskovec-2010-empirical}.

\setlength{\tabcolsep}{1ex}
\begin{wrapfigure}{r}{0.4\linewidth}
\includegraphics[width=0.65\linewidth]{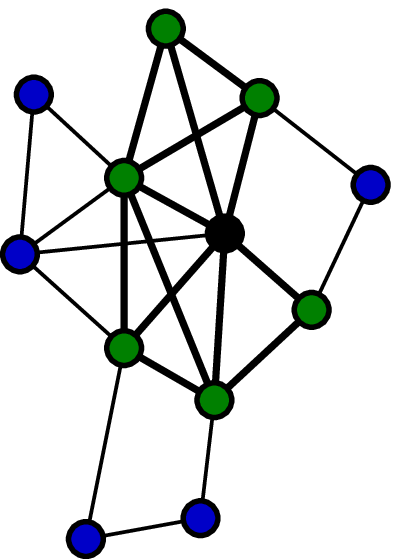}
\end{wrapfigure}
A vertex
neighborhood of a vertex $v$ is the set of vertices directly connected to 
$v$ via an edge and $v$ itself. For example, see the green and black vertices at right.
\emph{What we show here is that the presence of two commonly observed properties of 
modern information networks -- a large global clustering coefficient~\cite{watts1998-dynamics}
and a power-law degree distribution~\cite{barabasi1999-scaling} -- implies the existence of
vertex neighborhoods with good conductance scores.}   
We make
this statement precise in \Thm{cond}.
These results can be seen as an extension of 
the simple observation that, in the extreme case
when the global clustering coefficient of a network is 1, then the
network must be a union of cliques.  Neighborhoods define ideal communities
in this case. We mathematically show that this argument can be extended
to the case when the graph has a power-law degree distribution and a large 
clustering coefficient. 
The significance of this finding is that robust community detection
need not employ complicated algorithms.  Instead, a straightforward 
approach that just involves counting triangles -- a function that 
is easy to implement in MapReduce~\cite{Cohen-2009-graph-twiddling} 
and easy to approximate~\cite{Kolountzakis-2010-triangle-counting},
suffices to identify communities.
It is intriguing that arguably the two
most important measurable quantities of social networks imply that
communities are very easy to find. This may lead to more mathematical
work explaining the success of community detection algorithms, given
that the problem are in general NP-hard.
We note that unfortunately, our theoretical bounds reflect a worst
case behavior and are weaker than required for
practical use.  Consequently, in the remainder of the
paper we explore the utility of neighborhood communities
empirically.  

\SecLabel{formal} The technical discussion of the manuscript begins
by introducing our notation and precisely defines the quantities
we examine, such as clustering coefficients, due to variability
in the definitions of these measures.  We also discuss
the Andersen-Chung-Lang personalized PageRank clustering 
scheme~\cite{andersen2006-local} and the network whiskers
from Leskovec et al.~\cite{leskovec2008-community,Leskovec-2009-community-structure}.
We utilize the latter two algorithms as reference points for the
success of our community detection.

\SecLabel{related} We discuss some of the other observed
properties of egonets, or vertex neighborhoods, along with
other related work including overlapping communities.

\SecLabel{theory} We state and prove the theoretical
results that graphs with a power-law degree distribution
and large clustering coefficients have neighborhood
communities with good conductance scores.

\SecLabel{data} We review the data that will serve as the testbed
for our empirical evaluation of neighborhood cuts.  This comes from
a variety of public sources and spans collaboration networks,
social networks, technological networks, web networks, and 
random graph models.

\SecLabel{neigh-comm} Our empirical investigation of neighborhood
clusters takes the following form.  We first exhibit the conductance
scores for the set of neighborhood communities for a few graphs
(e.g.~\Fig{neigh-cond}).  
We find that neighborhood communities reflect the 
shape of the network community plot observed by 
Leskovec~et~al.~\cite{leskovec2008-community,Leskovec-2009-community-structure}
at small size scales.
We next compare the best neighborhood 
communities to those discovered by four other procedures: 
the Fiedler community, the best personalized PageRank community 
(\Sec{ppr}), the best network whisker (\Sec{whisker}),
and the best clusters from \algn{metis}~\cite{karypis1998-metis}.
In one third of the cases, the neighborhood community is as good
as the best of any of the other algorithms.

Another outcome of the theory from \Sec{theory} is
that large cores must exist in these graphs.  (Here,
a graph $k$-core is a subset of vertices where all nodes
have degree at least $k$~\cite{Seidman1983-cores}.)  We conclude
this section by exploring the community properties of the 
graph $k$-cores.

\SecLabel{seeds} Motivated by the success of the neighborhood communities
at small size scales, we explore using the best vertex neighborhoods
as \emph{seeds} for a local greedy community expansion procedure
and for the Andersen-Chung-Lang algorithm.  Here, we find
that these procedures, when seeded with an easy-to-identify
set of neighborhood communities, produce larger clusters
that decay as expected by the results in 
Leskovec~et~al.~\cite{leskovec2008-community,Leskovec-2009-community-structure}

We make all of our algorithm and experimental code, the majority
of the data for the experiments, and some extra figures that did not
fit into the paper available:
\begin{center}\fontsize{8}{10}\selectfont
\url{www.cs.purdue.edu/homes/dgleich/codes/neighborhoods}
\end{center}
These codes are easy to use.  
Given the adjacency matrix of a network \verb$A$, the single command
\begin{verbatim}
>> ncpneighs(A)
\end{verbatim}
will produce a figure analyzing the neighborhood communities
in comparison to the Fiedler community (formal definition in Section~\ref{sec:fiedler}).

\paragraph{Summary of contributions}
\begin{compactitem}
\item We theoretically motivate the study
of neighborhood communities by showing
they often have a low conductance in graphs
with a power-law degree distribution and 
large clustering coefficients. 
\item We empirically evaluate these
neighborhood communities and find them
comparable to those communities found
by other algorithms at small size scales.
\item We find a small set of neighborhood
communities that can be grown into larger
communities using a PageRank based community
detection algorithm.  The results match
those communities found with a more expensive
sweep over all communities.
\end{compactitem}

\section{Formal setting and notation}\label{sec:not}\label{sec:formal}
We first list out the various notations and formalisms used.
All of the key notation is summarized
in \Tab{notation}, and we briefly review it here.
Let $G =(V,E)$ be a loop-less, undirected, unweighted graph. 
We denote the number of vertices by $n = |V|$ and
the number of edges by $m = |E|$. 
In terms of the adjacency matrix, $m$ is half the number of 
non-zeros entries.
For a vertex $v$, let $d_v$ be the degree of $v$. 
For any positive integer $d$, let $f_d$ be the
number of vertices of degree $d$, that is, 
the frequency of $d$ in the degree distribution.
The maximum degree is denoted by $d_{max}$.
Let $D_r(v)$ to be the distance $r$-neighborhood of $v$. This is the set
of vertices whose shortest path distance from $v$ is exactly $r$. 
Then, we define the ball
of distance $r$ around $v$, denoted by $\ball{r}{v}$, as the set
$\bigcup_{i \leq r} D_r(v)$. 

\begin{table}
\caption{A summary of the notation.} \label{tab:notation}
\begin{tabularx}{\linewidth}{@{}rX@{}}
\toprule
$n=|V|$ & the number of vertices\\
$m=|E|$ & the number of edges\\
$d_v$ & the degree of vertex $v$\\
$f_d$ & the number of vertices of degree $d$\\
$W$ & the set of wedges in a graph\\
$W_v$ & the set of wedges centered at vertex $v$\\
$\kappa$ & the global clustering coefficient\\
$\bar{C}$ & the mean local clustering coefficient\\
$C_v$ & the local clustering coefficient for vertex $v$\\
$\ball{r}{v}$ & the set of vertices within distance $r$ or $v$\\
$E(S,T)$ & the set of edges between $S$ and $T$\\
$\cut(S)$ & the size of the cut around vertex set $S$\\
$\vol(S)$ & the sum of degrees (volume) of vertices in $S$\\
$\edges(S)$ & twice the number of edges among vertices in $S$\\
$\phi(S)$ & the conductance of vertex set $S$\\
\bottomrule
\end{tabularx}
\end{table}

\subsection{Clustering coefficients}

A \emph{wedge} is an unordered pair of edges that share an endpoint. 
The \emph{center}
of the wedge is the common vertex between the edges. 
A wedge $\{(s,t),(s,u)\}$ is \emph{closed}
if the edge $(t,u)$ exists, and is open otherwise.
We use $W$
to denote the set of wedges in $G$, and $W_v$ for the set
of wedges centered at $V$. 
Note that $|W_v| = \binom{d_v}{2}$. We set $p_v = |W_v|/|W|$.

Social networks often have large 
\emph{clustering coefficients}~\cite{watts1998-dynamics}. 
Because
of the varying definitions of this term that are used, 
we will denote by $\kappa$
the \emph{global clustering coefficient}. This quantity
is basically a normalized count of triangles. In the following,
we think of $w$ drawn uniformly at random from $W$. 
\[ \kappa = \Pr_{w \in W}[\text{$w$ is closed}] = \frac{\text{number of closed wedges}}{|W|} \]
In terms of triangles, $ \kappa = {3 \cdot \text{number of triangles}}/{|W|}.$
For any vertex $v$, $C_v$ is the \emph{local clustering coefficient of $v$}.
We draw $w$ uniformly at random from $W_v$.
\[ C_v = \Pr_{w \in W_v}[\text{$w$ is closed}] = \frac{\text{number of closed wedges in $W_v$}}{|W_v|} \]

\subsection{Cuts and Conductance}

Given a set of vertices $S$, the set $\bar{S}$ is the complement 
set, $\bar{S} = V \backslash S$.  For disjoint sets of vertices $S, T$,
$E(S,T)$ denotes the edges between $S$ and $T$. For convenience,
we denote \emph{the size of the cut induced by a set} 
$|E(S,\overline{S})|$ by $\cut(S)$.  

The conductance of a cluster (a set of vertices) measures the probability that a one-step random walk
starting in that cluster leaves that cluster. Let
$\vol(S)$ denotes the sum of degrees of vertices
in $S$ and $\edges(S)$ denotes twice the number
of edges among vertices in $S$ so that 
\[ \edges(S) = \vol(S) - \cut(S). \]
Then the conductance of set $S$, denoted $\phi(S)$, is
\[ \phi(S)  = \frac{\cut(S)}{\min\bigl(\vol(S),\vol(\bar{S})\bigr)}. \]
Conductance is measured with respect
to the set $S$ or $\bar{S}$ with smaller volume,
and is the probability of picking an edge from the smaller
set that crosses the cut.
Because of this property, conductance is
preserved on taking complements: $\phi(S) = \phi(\bar{S})$.  For
this reason, when we refer to the number of vertices in
a set of conductance $\phi$, we always use the smaller
set $\min(|S|,|\bar{S}|)$.
\begin{figure*}
\makebox[\textwidth]{
\hss
\subfloat[\label{fig:sample:neigh} Best neighborhood\newline size=8, cut=10, $\phi$=0.15]{\includegraphics[width=1.5in]{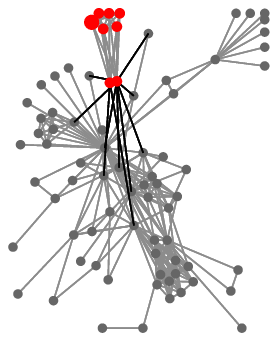}}
\hss
\subfloat[\label{fig:sample:fiedler} Fiedler community\newline size=36, cut=29, $\phi$=0.13]{\includegraphics[width=1.5in]{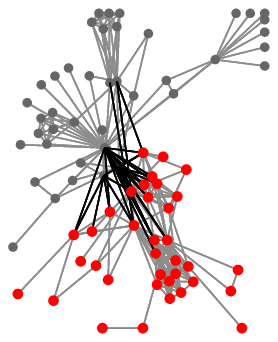}}
\hss
\subfloat[Best $k$-core\newline size=12, cut=34, $\phi$=0.22]{\includegraphics[width=1.5in]{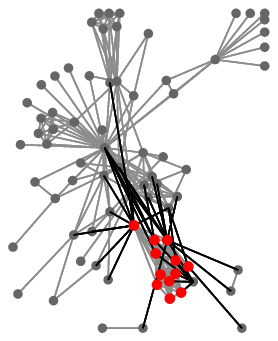}}
\hss
\subfloat[\label{fig:sample:ppr} Best \algn{PPR} community\newline size=28, cut=31, $\phi$=0.12]{\includegraphics[width=1.5in]{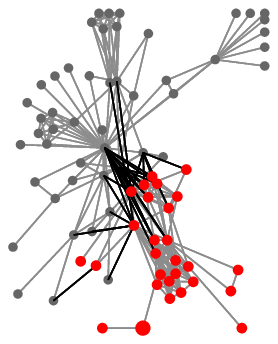}}
\hss
}
\caption{A series of vertex sets and their associated sizes and conductance 
score on the graph of characters from \emph{Les Mis\'{e}rables}~\cite{Knuth-1993-graphbase}.
The best neighborhood and best $k$-core are two of the communities we 
discuss further in \Sec{neigh-comm}.  See \Sec{methods} for information
on the Fiedler and PPR communities.
}
\label{fig:sample}
\end{figure*}
\Fig{sample} shows a few communities and their associated
cuts and conductance scores from our methods and two
points of comparison.  

\subsection{Finding good conductance communities} \label{sec:methods} 

We briefly review three ways of identifying a community
with a good conductance score.

\paragraph{Fiedler set} \label{sec:cheeger}\label{sec:fiedler}
The well-known Cheeger inequality defines 
a bound between the second smallest eigenvalue
of the normalized Laplacian matrix and the set
of smallest conductance in a graph~\cite{Chung-1992-book}.
Formally,
\[ (1/2)\lambda_2 \le \min_{S \subset V} \phi(S) \le \sqrt{2 \lambda_2} \]
where $\lambda_2$ is the second smallest eigenvalue of the
normalized Laplacian.
The proof is constructive.  It identifies
a set of vertices that obeys the upper-bound using
a \emph{sweep} cut. This is the smallest
conductance cut among all cuts
induced by ordering vertices by increasing
values of $\sqrt{d_v} x_v$, where $x_v$ is the component 
of the eigenvector associated with $\lambda_2$.
This is the same idea used in normalized cut
procedures~\cite{Shi2000-normalized-cuts}.
We refer to the set identified by this procedure
as the \emph{Cheeger} community or \emph{Fiedler}
community.  The latter term is based on Fiedler's
work in using the second smallest eigenvalue
of the combinatorial Laplacian 
matrix~\cite{Fiedler1973-algebraic-connectivity}.
\Fig{sample:fiedler} shows the Fiedler community
for the Les Mis\'{e}rables network.

\paragraph{Personalized PageRank communities} \label{sec:ppr}
Another highly successful scheme for
community detection based on conductance 
uses personalized PageRank vectors.
A personalized PageRank vector is the 
stationary distribution of a random walk
that follows an edge
of the graph  with probability $\alpha$  and 
``teleports'' back to a fixed seed vertex with probability $1-\alpha$.
We use $\alpha=0.99$ in all experiments.
The essence of the induced community 
is that an inexact personalized PageRank vector, 
computed via an algorithm
that ``pushes'' rank round the graph, will
identify good bottlenecks nearby a seed 
vertex.  These bottlenecks can be formalized
in a Cheeger-like bound~\cite{andersen2006-local}.
The procedure to find a personalized PageRank community
is: i) specify a value of $\alpha$, a seed vertex $v$,
and a desired clusted size $\sigma$; ii) solve the personalized
PageRank problem using the algorithm from~\cite{andersen2006-local}
until a degree-weighted tolerance of $\tau = 1/(10\sigma)$;
and iii) sweep over all cuts induced by the ordering of the
personalized PageRank vector (normalized by degrees) and 
choose the best.
Personalized PageRank communities (\algn{PPR} communities, for short)
were used
to identify an interesting empirical property
of communities in large networks~\cite{leskovec2008-community,Leskovec-2009-community-structure}.
To generate these plots, those authors examined
a range of values of $\sigma$ for a large number of
vertices of the graph and summarized the best communities
found at any size scale in a network community plot.
\Fig{sample:ppr} shows the best personalized PageRank
community for the network of character interactions in Les Mis\'{e}rables.

\paragraph{Whisker communities} \label{sec:whisker}
Perhaps the best point of comparison with our
approach are the \emph{whisker communities}
defined by Leskovec et al.~\cite{leskovec2008-community,Leskovec-2009-community-structure}.
These communities are small dense subgraphs connected
by a single edge.  They can be found by looking
at any subgraph connected to the largest biconnected
component by a single edge. A biconnected component
remains connected after the removal of any vertex.
Note that the largest biconnected component is not necessarily
a 2-core of the graph.  Leskovec et~al.~observed
that many of these subgraphs are rather dense.  Each
subgraph has a cut of exactly one, and consequently, 
a productive means of finding sets with low conductance
is to sort these subgraphs by their volume.  The best
whisker cut is the single subgraph with largest volume.

\section{Related work} \label{sec:related}

We are hardly the first to notice that vertex
neighborhoods have special properties.  

\paragraph{Egonets, homophily, and structural holes}
In the context of social networks, vertex neighborhoods
are often called egonets because they reflect the 
the state of the network as perceived by a single
vertex.  Their analysis is a key component 
in the study of social networks~\cite{Wasserman-1994-sna},
especially in terms of data collection.
Studies of these networks often focus on the
theory of structural holes, which is the notion that an 
individual can derive an advantage from serving
as a bridge between disparate groups~\cite{Burt-1995-structural}.
These bridge roles are interesting because they contradict
homophily in social ties.  Homophily, or the principle
that similar individuals form ties, is the mechanism that is expected to produce
networks with large local clustering coefficients~\cite{McPherson-2001-homophily}.
These social theories have prompted the development of new methods 
to tease apart some of these effects in real-world networks~\cite{LaFond-2010-homophily},
and to develop network models that capture structural holes~\cite{Kleinberg-2008-structural-holes}.

\paragraph{Clustering and communities}

Vertex neighborhoods often play a role in other
techniques to find community or clustering
structure in a network.  Overlap in the neighborhood sets of vertices
is a common vertex similarity metric used to 
guide graph clustering 
algorithms~\cite{Schaeffer-2007-clustering}.
Other schemes utilize vertex neighborhoods 
as good \emph{seed sets} for local techniques to
grow communities~\cite{Schaeffer-2006-thesis,Huang-2011-community}.
We explore using a carefully
chosen set of neighborhoods for this purpose in our
final empirical discussion (\Sec{seeds}).
Perhaps the most closely related work is a recent idea
to utilize the connected components of ego-nets, after their
ego vertex is removed, to produce a good set of overlapping
communities~\cite{Rees-2010-overlapping}.  Our
theoretical results establish that these ideas are highly
likely to succeed in networks with local clustering
and power-law degree distributions.

\paragraph{Graph properties}
Much of the modern work on networks rests on
surprising empirical observations about the
structure of real world connections.  For instance,
information networks were found to have a 
power-law in the degree 
distribution~\cite{barabasi1999-scaling,Faloutsos-1999-power-law}.
These same networks were also found to have 
considerable local structure in the form of
large clustering coefficients~\cite{watts1998-dynamics},
but retained a small global diameter.  
Our theory shows that a third potential observation -- 
the existence of vertex neighborhood with low conductance --
is in fact implied by these other two properties.
We formally show that heavy tailed degree distributions and high clustering
coefficients imply the existence of large dense cores.

\paragraph{Anomoly detection}
Predictable behavior in the structure of ego-nets
makes them a useful tool for detecting anomalous
patterns in the structure of the network.  
For instance, Akoglu et~al.~\cite{Akoglu-2010-oddball}
compute a small collection of measures on each egonet,
such as the average degree and largest eigenvalues.
Outliers in this space of vertices are often rather
anomalous vertices.  
Our work is, in contrast, a precise statement about
the regularity of the ego-nets, and says that we
always expect a large ego-net to be a good community.

\paragraph{Summary} Although we are not the first 
to study neighborhood based communities, the relationship
between the local clustering, power-law degree distributions,
and large neighborhoods with small conductance does not appear
to have been noticed before.

\section{\texorpdfstring{Theoretical justification for \\neighborhood communities}{%
Theoretical justification for neighborhood communities}} \label{sec:theory}

The aim of this section is to provide some mathematical justification for the success
of neighborhood cuts. 
%
Our aim is to show that heavy tailed degree distributions and large clustering coefficients
imply the existence of neighborhood cuts with low conductance and large dense cores. 
As mentioned earlier, the exact bounds we get are somewhat weak and only hold
when the clustering coefficient is extremely large. Nonetheless, the proofs give significant intuition
into \emph{why} neighborhoods are good communities. 

We begin with the extreme case when
the value of $\kappa$ is $1$ (so \emph{every} wedge is closed). Then
we have the following simple claim.

\begin{claim} \label{clm:clique} Suppose the global clustering coefficient
of $G$ is $1$. Then $G$ is the union of disjoint cliques.
\end{claim}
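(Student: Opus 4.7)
The plan is to unpack what $\kappa = 1$ says structurally and then argue component-by-component. Since $\kappa$ is the fraction of wedges in $G$ that are closed, the assumption $\kappa = 1$ is equivalent to saying that \emph{every} wedge is closed. Translating this to the per-vertex view: for every vertex $v$ and every pair of distinct neighbors $u, w$ of $v$, the edge $(u,w)$ must exist (since $\{(v,u),(v,w)\}$ is a wedge centered at $v$). Equivalently, for every $v$ the closed neighborhood $\ball{1}{v}$ induces a clique.

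Given that observation, the proof reduces to showing that each connected component of $G$ is exactly $\ball{1}{v}$ for any of its vertices $v$. I would prove this by contradiction. Fix a vertex $v$, let $K$ be its connected component, and suppose there exists $w \in K \setminus \ball{1}{v}$. Pick such a $w$ minimizing the graph distance to $v$; this distance is at least $2$ by choice. Take a shortest $v$--$w$ path and let $u$ be the vertex immediately preceding $w$ on this path. Then $u$ is strictly closer to $v$ than $w$ is, so by minimality $u \in \ball{1}{v}$, i.e., $u = v$ or $u$ is a neighbor of $v$.

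If $u = v$, then $w$ is a neighbor of $v$ and we already have $w \in \ball{1}{v}$, contradicting the choice of $w$. Otherwise $u \neq v$, and then $v, u, w$ form a wedge centered at $u$ (with edges $(u,v)$ and $(u,w)$). Since $\kappa = 1$ this wedge is closed, so $(v,w)$ is an edge, again placing $w \in \ball{1}{v}$ and yielding a contradiction. Hence $K = \ball{1}{v}$, and combined with the earlier observation $K$ is a clique. Since the connected components partition $V$, $G$ is a disjoint union of cliques.

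I do not anticipate a real obstacle here; the only thing to be careful about is the small case analysis on whether the penultimate vertex on the shortest path is $v$ itself, which is handled cleanly above. The argument uses nothing beyond the definition of $\kappa$ and basic shortest-path reasoning, so no further machinery is needed.
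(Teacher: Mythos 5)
Your proof is correct and uses essentially the same argument as the paper: both take a shortest path from $v$ to a vertex at distance greater than $1$, observe that the wedge at the penultimate vertex must be closed, and derive a contradiction with minimality. The only cosmetic difference is that you phrase it as ``each component equals $\ball{1}{v}$'' with a minimal-counterexample choice of $w$, while the paper directly contradicts the assumed shortest-path length; the content is identical.
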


\begin{proof} Consider two vertices $u$ and $v$ that are connected.
Suppose the shortest path distance between them is $\ell > 1$. Then the shortest
path has at least $3$ distinct vertices (including $u$ and $v$). Take the last
three vertices on this path, $v_1, v_2, v$. This forms a wedge at $v_2$, and must
be closed (since the clustering coefficient is $1$). Hence, the edge $(v_1,v)$ exists
and there exists a path between $u$ and $v$ of length less than $\ell$. This
is a contradiction.

Hence, any two connected vertices have a shortest path distance of $1$, i.e., are connected
by an edge. The graph is a disjoint union of cliques.
\end{proof}

Note that the neighborhood of any vertex in the above claim forms a clique disconnected
from the rest of $G$. Therefore, all neighborhoods form perfect communities, in this
extremely degenerate case. We prove this for more general settings.
The quantities $p_v = |W_v|/|W|$,
form a distribution over the set of vertices. Since
we are performing an asymptotic analysis, we will use $o(1)$ to denote
any quantity that becomes negligible as the graph size increases.
We will choose $\beta$ to be a constant less than $1$. It is quite unimportant
for the asymptotic analysis what this constant is. From a pratical standpoint,
think of $\beta$ as a constant such that most edges are incident to
a vertex of degree at least $\dm^\beta$ ($2/3$ is usually a reasonable value).
Also, we will assume that the power law exponent is at most $3$, a fairly
acceptable condition.
%

\begin{claim} \label{clm:degdist} Let $S$ be the set of vertices with
degrees more than $\dm^{\beta}$. Then,
$ \sum_{v \in S} p_v = 1 - o(1) $.
\end{claim}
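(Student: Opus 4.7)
The plan is to exploit the fact that $\sum_v p_v = 1$, so it suffices to show that the complementary sum, over low-degree vertices, is $o(1)$. Formally, let $L = V \setminus S$ be the set of vertices with $d_v \leq \dm^\beta$. I want to bound
\[ \sum_{v \in L} p_v \;=\; \frac{1}{|W|}\sum_{v\in L}\binom{d_v}{2} \;=\; \frac{1}{|W|}\sum_{d=1}^{\dm^\beta} f_d \binom{d}{2}. \]

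First I would regroup by degree and plug in the power-law assumption: there is a constant $C$ and an exponent $\gamma \le 3$ such that $f_d \le C n d^{-\gamma}$. Upper-bounding $\binom{d}{2}$ by $d^2/2$ and comparing the resulting sum with $\int_1^{\dm^\beta} x^{2-\gamma}\,dx$, the numerator is $O(n (\dm^\beta)^{3-\gamma}) = O\bigl(n \dm^{\beta(3-\gamma)}\bigr)$ when $\gamma < 3$, with an extra $\log \dm$ factor in the boundary case.

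Next I would give a matching lower bound on $|W|$. Since the same power-law lower bound $f_d \gtrsim n d^{-\gamma}$ holds for the entire degree range up to $\dm$, the same integral estimate yields $|W| = \Omega\bigl(n \dm^{3-\gamma}\bigr)$. Dividing, the ratio is $O\bigl(\dm^{(\beta-1)(3-\gamma)}\bigr)$, which is $o(1)$ as $\dm \to \infty$ because $\beta < 1$ and $3-\gamma > 0$. (Alternatively, a cruder but robust lower bound $|W| \ge \binom{\dm}{2}$ suffices in combination with the standard power-law relation $\dm \approx n^{1/(\gamma-1)}$ to reach the same conclusion.)

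The main obstacle I expect is the boundary case $\gamma = 3$, where both the numerator and denominator pick up a $\log \dm$ factor and their ratio is $\beta$ rather than $o(1)$. To handle this cleanly I would either (i) interpret ``power-law exponent at most $3$'' as strictly less than $3$, which is what is needed for the asymptotic decay, or (ii) absorb the $\gamma = 3$ case into an extra mild assumption, e.g., that the coefficient on the tail is not overwhelmed by the head, so that the high-degree contribution to $|W|$ still dominates. A secondary but minor technical point is handling the inequality $\binom{d}{2} \le d^2/2$ and the transition from sums to integrals, which is routine but should be done carefully enough that the hidden constants do not depend on $\dm$.
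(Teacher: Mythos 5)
Your proof is correct and takes essentially the same approach as the paper's: the paper computes $\sum_{v\in S}2|W_v|$ directly from $f_d\approx\alpha n d^{-\gamma}$ and observes that the total wedge count is dominated by the high-degree tail since $3-\gamma>0$, which is exactly your complementary-sum estimate $O\bigl(n\dm^{\beta(3-\gamma)}\bigr)$ versus $|W|=\Omega\bigl(n\dm^{3-\gamma}\bigr)$. Your caution about the boundary case is consistent with the paper, which despite saying ``at most $3$'' in the setup actually assumes $\gamma<3$ inside the proof.
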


\begin{proof} We can set $p_v = (2|W_v|)/(2|W|)$.
For convenience, set $d_1 = \dm^\beta$ and $d_2 = \dm$. 
We have $f_d \approx \alpha n/d^\gamma$, for some constant $\alpha$ and $\gamma < 3$.
\[ \sum_{v \in S} 2|W_v| \approx \sum_{d = d_1}^{d_2} d^2f_d \approx \alpha n \sum_{d = d_1}^{d_2} d^{2-\gamma} \approx \alpha' n(d^{3-\gamma}_1 - d^{3-\gamma}_2) \]
The total number of wedges behaves like $\alpha' nd^{3-\gamma}_1$
and hence, $2\sum_{v \in S}|W_v| = 2|W| - o(|W|)$.
\qed\\
\end{proof}

\begin{claim} \label{clm:cc} $\sum_v p_v C_v = \kappa$
\end{claim}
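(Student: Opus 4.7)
The plan is to unfold the three definitions and observe that the sets $\{W_v\}_{v \in V}$ partition $W$, so summing the count of closed wedges per center recovers the global count. Explicitly, I would write
\[
p_v C_v \;=\; \frac{|W_v|}{|W|} \cdot \frac{\#\{\text{closed wedges in } W_v\}}{|W_v|} \;=\; \frac{\#\{\text{closed wedges in } W_v\}}{|W|},
\]
which is valid whenever $|W_v|\neq 0$; vertices with $|W_v|=0$ contribute a zero term either way (set $C_v$ to $0$ by convention, or just omit them since $p_v=0$), so the identity holds with no edge cases.

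Next I would sum over $v$. Because every wedge $\{(s,t),(s,u)\}$ has a unique center (the common vertex $s$), the family $\{W_v : v \in V\}$ is a disjoint decomposition of $W$, and the same holds for the subfamilies of closed wedges. Therefore
\[
\sum_v p_v C_v \;=\; \frac{1}{|W|}\sum_v \#\{\text{closed wedges in } W_v\} \;=\; \frac{\#\{\text{closed wedges}\}}{|W|} \;=\; \kappa,
\]
where the last equality is the definition of $\kappa$ given in the clustering coefficients subsection.

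There is essentially no obstacle here: the statement is a one-line bookkeeping identity expressing $\kappa$ as a $p_v$-weighted average of the $C_v$'s. The only mild subtlety worth flagging explicitly in the write-up is the degree-$0$ and degree-$1$ vertices, for which $|W_v|=0$ and $C_v$ is undefined; the claim is unaffected because their weight $p_v$ is also $0$.
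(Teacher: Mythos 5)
Your proposal is correct and follows essentially the same argument as the paper: cancel the $|W_v|$ factors, sum the closed-wedge counts over the disjoint decomposition of $W$ by wedge centers, and recover $\kappa$. The remark about degree-$0$ and degree-$1$ vertices is a harmless extra precaution that the paper omits.
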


\begin{proof}
\begin{eqnarray*}
\sum_v p_v C_v & = & \sum_v \frac{|W_v|}{|W|} \cdot  \frac{\textrm{number of closed wedges in $W_v$}}{|W_v|} \\
& = & \frac{\sum_v\textrm{(\# closed wedges in $W_v$)}}{|W|} = \kappa. \qed
\end{eqnarray*}
\end{proof}

We come to our important lemma. This argues that on the average, neighborhood cuts must
have a low conductance.

%
\begin{lemma} \label{lem:nbd}
$$ \sum_v \left(p_v  \frac{\cut(N_1(v))}{|W_v|}\right) = 2(1 - \kappa) $$
\end{lemma}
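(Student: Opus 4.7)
The plan is to reduce the statement to a double-counting identity. Since $p_v = |W_v|/|W|$, we have $p_v/|W_v| = 1/|W|$, so the LHS equals $\frac{1}{|W|}\sum_v \cut(N_1(v))$. Because the number of open wedges is exactly $(1-\kappa)|W|$, it suffices to prove
$$ \sum_v \cut(N_1(v)) \;=\; 2\,(1-\kappa)\,|W| \;=\; 2 \cdot (\text{number of open wedges}). $$

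The next step is to count pairs $(v, e)$ with $e \in E$, $v \in V$, and $e \in \cut(N_1(v))$ by interpreting each such pair as an open wedge. Fix such a pair and write $e=(a,b)$. I first check that exactly one endpoint of $e$ is adjacent to $v$: if $a=v$ then $b$ is a neighbor of $v$ and hence lies in $N_1(v)$, contradicting that $e$ crosses the cut; similarly, $b\neq v$. So $a,b \neq v$ and exactly one of $a,b$, say $a$, is a neighbor of $v$, while $b$ is not adjacent to $v$. Then $\{(a,v),(a,b)\}$ is a wedge centered at $a$, and it is \emph{open} precisely because $b$ is not adjacent to $v$.

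Conversely, an open wedge centered at $a$ with non-center endpoints $x_1,x_2$ yields two valid pairs: $(v=x_1,\,e=(a,x_2))$ and $(v=x_2,\,e=(a,x_1))$. In each case, by the same argument in reverse, the edge $e$ crosses the cut around $N_1(v)$. This establishes a $2$-to-$1$ correspondence between pairs $(v,e)$ with $e\in\cut(N_1(v))$ and open wedges in $G$, giving $\sum_v \cut(N_1(v)) = 2 \cdot (\text{number of open wedges})$ and hence the lemma.

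The only subtle step is the case analysis showing that whenever $e=(a,b)\in\cut(N_1(v))$, we must have $v\notin\{a,b\}$ and exactly one of $a,b$ is adjacent to $v$; the rest is a straightforward bijection. The other potentially confusing point is bookkeeping the factor of $2$, which arises because an unordered open wedge has two non-center vertices, either of which can play the role of $v$.
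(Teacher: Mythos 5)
Your proof is correct and is essentially the same argument as the paper's: both reduce the left-hand side to $\frac{1}{|W|}\sum_v \cut(N_1(v))$ and then double-count, identifying each edge leaving $N_1(v)$ with an open wedge centered at the neighbor of $v$ on that edge, with each open wedge counted twice (once per non-center endpoint playing the role of $v$). The paper organizes this as an exchange of summation order rather than an explicit $2$-to-$1$ correspondence, but the content is identical.
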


\begin{proof}  We express the sum of $\cut(N_1(v))$ as a double summation, and perform
some algebraic manipulations.
\begin{eqnarray*} \sum_v \cut(N_1(v)) & = & \sum_v \sum_{u \in N_1(v)} |N_1(u) \setminus (N_1(v) \cup \{v\})| \\
& = & \sum_u \sum_{v \in N_1(u)} |N_1(u) \setminus (N_1(v) \cup \{v\})| \\
& = & \sum_u \sum_{v \in N_1(u)} \textrm{(\# open wedges centered} \\
& & \ \ \ \ \textrm{at $u$ involving edge $(u,v)$)} \\
& = & 2\sum_u \textrm{(\# open wedges centered at $u$)} \\
& = & 2(1-\kappa)|W|
\end{eqnarray*}
We complete the proof with the following simple observation:
\[ \sum_v \left(p_v \frac{\cut(N_1(v))}{|W_v|}\right) = \frac{\sum_v \cut(N_1(v))}{|W|}. \qed \]
\end{proof}


%
%

\begin{theorem} \label{thm:core} There exists a $k$-core in $G$
for $k \geq \kappa \dm^{\beta}/2$.
\end{theorem}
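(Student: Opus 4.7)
The plan is to exhibit a single vertex $v$ whose immediate neighborhood, viewed as an induced subgraph, is dense enough that a standard greedy peeling argument extracts a large core from within it. The key observation driving everything is that each closed wedge centered at $v$ corresponds to an edge among $v$'s neighbors, so the number of edges in the neighborhood subgraph is exactly $C_v \binom{d_v}{2}$, and its density is directly controlled by $d_v$ and $C_v$.

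First I would combine Claims~\ref{clm:degdist} and~\ref{clm:cc} to find a vertex with both quantities large simultaneously. Let $S = \{v : d_v \geq \dm^\beta\}$. Claim~\ref{clm:degdist} gives $\sum_{v \in S} p_v = 1 - o(1)$, and since $C_v \leq 1$, vertices outside $S$ contribute at most $o(1)$ to the sum $\sum_v p_v C_v = \kappa$ from Claim~\ref{clm:cc}. Hence the $p_v$-weighted average of $C_v$ restricted to $S$ is at least $\kappa - o(1)$, and some $v \in S$ satisfies $d_v \geq \dm^\beta$ and $C_v \geq \kappa - o(1)$. Consider the subgraph $H$ of $G$ induced on the $d_v$ neighbors of this vertex (excluding $v$ itself). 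It has $C_v \binom{d_v}{2}$ edges and therefore average degree $C_v(d_v - 1) \geq \kappa \dm^\beta (1 - o(1))$.

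To finish, I would invoke the classical fact that any graph with average degree at least $2k$ contains a $k$-core: iteratively remove any vertex of degree less than $k$, noting that each removal destroys fewer than $k$ edges. If this process emptied $H$, it would have destroyed fewer than $k \cdot |V(H)|$ edges in total, contradicting the fact that $|E(H)| \geq k \cdot |V(H)|$. Applied to $H$ with $k = \kappa \dm^\beta / 2 \cdot (1 - o(1))$, this produces a $k$-core of $G$ matching the claimed bound. The only delicate step is tracking the $o(1)$ losses in the averaging, so that the resulting bound remains $\kappa \dm^\beta / 2$ in the asymptotic sense used throughout the section; apart from that, the proof is a direct assembly of the earlier claims with the peeling lemma and I do not anticipate a genuinely hard obstacle.
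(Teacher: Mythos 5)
Your proposal is correct and follows essentially the same route as the paper: both use Claims~\ref{clm:degdist} and~\ref{clm:cc} to locate a vertex with $d_v \geq \dm^\beta$ and $C_v \geq \kappa - o(1)$, and then extract a core from the dense induced neighborhood by greedy min-degree peeling. The only cosmetic differences are that you exclude $v$ from the induced subgraph (the paper keeps it, which only adds edges) and you phrase the peeling step as the standard ``average degree $\geq 2k$ implies a $k$-core'' contradiction rather than the paper's averaging over the degrees $e_i$ of removed vertices --- these are the same argument.
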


\begin{proof} By Claims~\ref{clm:degdist} and \ref{clm:cc},
$$ \kappa = \sum_v p_v C_v = \sum_{v \in S} p_v C_v + \sum_{v \in \overline{S}} p_v C_v \leq \sum_{v \in S} p_v C_v + o(1) $$
This implies that there exists some vertex $v$ such that $d_v > \dm^{\beta}$ and $C_v \geq \kappa - o(1)$ (for convenience,
we are going to drop the $o(1)$ lower order term). Consider $G'$, the induced subgraph of $G$ on $N_1(v)$. 
The total number of vertices is exactly $d_v + 1$. Because a $\kappa$-fraction of the wedges
centered at $v$ are closed, the number of edges in $G'$ is at least $\kappa \binom{d_v}{2}$.
So $G'$ is a dense graph, and we will show that it contains a large core. Perform
a core decomposition on $G'$. We iteratively remove the vertex of min-degree until
the graph has no edges left. The total number of iterations is atmost $d_v$.
Let the degree of the removed vertex at iteration $i$ be $e_i$. We
have $\sum_{1 \leq i \leq d_v} e_i = \kappa {d_v\choose 2}$. By an averaging
argument, there exists some $i$ such that $e_i \geq \kappa (d_v-1)/2$.
At this point, \emph{all} (unremoved) vertices of $G'$ must have
a degree of at least $(d_v-1)/2$, forming a $k$-core with $k \geq \kappa \dm^{\beta}/2$.
\qed
\end{proof}

We come to our main theorem that proves the existence of a neighborhood cut with low conductance. 
When $\kappa = 1$, we get back the statement of Claim~\ref{clm:clique}, since we have a set
of conductance $0$. But this theorem also gives non-trivial bounds for large values of $\kappa$.
As we mentioned earlier, when $\kappa$ becomes small, this bound is not useful any longer.

\begin{theorem} \label{thm:cond} There exists a neighborhood cut with conductance at least $4(1-\kappa)/(3-2\kappa)$.
\end{theorem}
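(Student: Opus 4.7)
The plan is to invoke Lemma~\ref{lem:nbd}, $\sum_v p_v \cut(N_1(v))/|W_v| = 2(1-\kappa)$, together with Claim~\ref{clm:cc}, $\sum_v p_v C_v = \kappa$, and exhibit a single vertex $v$ whose neighborhood cut witnesses the claimed bound. Throughout I set $X_v := \cut(N_1(v))/|W_v|$, and note that the internal edges of $N_1(v)$ consist of the $d_v$ edges incident to $v$ plus one edge per closed wedge at $v$, so $\edges(N_1(v)) = 2 d_v + 2 C_v |W_v|$. Since $\phi(S) \geq \cut(S)/\vol(S)$ always (as $\min(\vol(S),\vol(\bar S)) \leq \vol(S)$), this gives $\phi(N_1(v)) \geq X_v/(X_v + 2 C_v + 2 d_v/|W_v|)$.

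First I would restrict attention to the high-degree subset $S$ from Claim~\ref{clm:degdist}, where $2 d_v/|W_v| = 4/(d_v - 1)$ is $o(1)$, so effectively $\phi(N_1(v)) \geq X_v/(X_v + 2 C_v)$. Cross-multiplying, the target $\phi(N_1(v)) \geq 4(1-\kappa)/(3-2\kappa)$ is equivalent to the scalar inequality $X_v(2\kappa-1) \geq 8(1-\kappa) C_v$, i.e., $T_v \geq 0$ where $T_v := X_v(2\kappa-1) - 8(1-\kappa) C_v$. Combining Lemma~\ref{lem:nbd} and Claim~\ref{clm:cc}, the $p_v$-weighted mean simplifies to $\sum_v p_v T_v = 2(1-\kappa)(2\kappa-1) - 8(1-\kappa)\kappa = -2(1-\kappa)(2\kappa+1)$.

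Second, I would exhibit a vertex $v \in S$ with $T_v \geq 0$. Because the weighted mean of $T_v$ is negative, this cannot be read off from plain averaging, so I would look for a structurally distinguished vertex---in particular, the one produced inside the proof of Theorem~\ref{thm:core}, which guarantees a $v \in S$ with $C_v \geq \kappa - o(1)$ and a correspondingly dense induced neighborhood. Inside this dense neighborhood, the $k$-core extraction in Theorem~\ref{thm:core} controls the edges-versus-cut balance precisely enough to pin down $X_v$ from below and yield $T_v \geq 0$, translating back to $\phi(N_1(v)) \geq 4(1-\kappa)/(3-2\kappa)$.

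The main obstacle is precisely this direction of inequality: since averaging alone produces a negative mean of $T_v$, the proof cannot simply read off an extremal vertex from Lemma~\ref{lem:nbd} but must combine it with the local structural information from Theorem~\ref{thm:core}, or else use a reweighted average (e.g., weighting by $p_v \cut(N_1(v))$ or $p_v C_v$) so that the rearranged identity witnesses a qualifying vertex. I would verify the bound as a consistency check at $\kappa = 1$: the right-hand side collapses to $0$ and the statement reduces to Claim~\ref{clm:clique}, where every neighborhood is a clique disconnected from the rest of $G$.
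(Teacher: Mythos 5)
There is a genuine gap, and it starts with the direction of the inequality. The word ``at least'' in the theorem statement is evidently a typo for ``at most'': the sentence introducing the theorem says it ``proves the existence of a neighborhood cut with \emph{low} conductance,'' and the paper notes that at $\kappa=1$ it recovers Claim~\ref{clm:clique} ``since we have a set of conductance $0$.'' Your own sanity check should have flagged this: ``conductance at least $0$'' is vacuous for every set, whereas ``conductance at most $0$'' is exactly the disconnected-clique statement of Claim~\ref{clm:clique}. Likewise, an upper bound of $4(1-\kappa)/(3-2\kappa)$ is informative precisely when $\kappa$ is large, matching the paper's remark that the bound degrades as $\kappa$ shrinks. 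By taking the literal reading you set out to prove the wrong inequality, and the symptoms show up immediately: your quantity $T_v$ has negative $p_v$-weighted mean, so no averaging argument can produce a vertex with $T_v\ge 0$, and your fallback---borrowing the vertex from Theorem~\ref{thm:core}---cannot close the gap either, because that argument only lower-bounds the number of \emph{internal} edges of $N_1(v)$ (via $C_v\ge\kappa-o(1)$); it gives no lower bound on $\cut(N_1(v))$, which is what ``pinning down $X_v$ from below'' would require. Indeed no such lower bound can hold: a high-clustering vertex whose neighborhood is an isolated clique has $X_v=0$.

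The intended proof is a short two-event union bound in the correct direction. Draw $v$ with probability $p_v$ and set $X=\cut(N_1(v))/|W_v|$. Lemma~\ref{lem:nbd} gives $\EX[X]=2(1-\kappa)$, so Markov yields $\Pr[X>4(1-\kappa)]\le 1/2$. Separately, Claim~\ref{clm:cc} gives $\EX[C_v]=\kappa$, and since $C_v\le 1$, setting $\alpha=2\kappa-1$ forces $\Pr[C_v<\alpha]<(1-\kappa)/(1-\alpha)=1/2$. By the union bound some $v$ satisfies both $\cut(N_1(v))\le 4(1-\kappa)|W_v|$ and $|E|\ge\alpha|W_v|$, where $E$ is the edge set induced on $N_1(v)$, and then
\[
\phi(N_1(v))\;\le\;\frac{\cut(N_1(v))}{|E|+\cut(N_1(v))}\;\le\;\frac{4(1-\kappa)}{(2\kappa-1)+4(1-\kappa)}\;=\;\frac{4-4\kappa}{3-2\kappa}.
\]
Your identities (the decomposition $\edges(N_1(v))=2d_v+2C_v|W_v|$ and the combined average of Lemma~\ref{lem:nbd} with Claim~\ref{clm:cc}) are correct and are the right raw material, but the argument must trade the single averaged identity for two one-sided tail bounds combined by a union bound; as written, your proposal does not reach the stated conclusion in either direction.
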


\begin{proof} The proof uses the probabilistic method, given the bounds
of Lemma~\ref{lem:nbd} and Claim~\ref{clm:cc}. Suppose we choose a vertex $v$
according to the probability distribution given by $p_v$. Let $X$
denote the random variable ${\cut(N_1(v))}/{|W_v|}$, so $\EX[X] = 2(1-\kappa)$ (Lemma~\ref{lem:nbd}).
By Markov's inequality,
$\Pr[X > 4(1-\kappa)] \leq 1/2$.

Set $\alpha = 2\kappa - 1$, and set $\Pr[C_v < \alpha] = p$. 
$$ \kappa < p\alpha + (1-p) \Longrightarrow p < (1-\kappa)/(1-\alpha) = 1/2 $$
By the union bound, the probability that ${\cut(N_1(v))}/{|W_v|} > 4(1-\kappa)$
\emph{or} $C_v < \alpha$ is less than $1$. Hence, there exists some vertex $v$
such that $\cut(N_1(v)) \leq 4(1-\kappa)|W_v|$ and $C_v \geq \alpha$ (we can
also show that $d_v \geq n^\beta$). Let $E$ be the set of edges in the subgraph induced on $N_1(v)$.
Since $C_v \geq \alpha$, $|E| \geq \alpha|W_v|$.
We can bound the conductance of $N_1(v)$,
\begin{eqnarray*}
\frac{\cut(N_1(v))}{|E| + \cut(N_1(v))} & \leq & \frac{4(1-\kappa)|W_v|}{\alpha|W_v| + 4(1-\kappa)|W_v|} 
= \frac{4-4\kappa}{3-2\kappa}. \qed
\end{eqnarray*}
\end{proof}


\section{Data} \label{sec:data}

Before we begin our empirical comparison, we first discuss
the data we use to compare and evaluate algorithms.  These
come from a variety of sources.  See \Tab{data} for a summary
of the networks and their basic statistics. All networks are undirected and were symmetrized if the original
data were directed.  Also, any self-loops in the networks were
discarded.  We only look at the largest connected component
of the network.
There are five types of networks:

\begin{table}
\caption{Datasets for our experiments.  The five types are:
collaboration networks, social networks, technological networks,
web graphs, and forest fire models.}
\label{tab:data}
\fontsize{8}{11}\selectfont
\begin{tabularx}{\linewidth}{@{}l@{\,\;}l@{\;\;\,}l@{\;\;}U@{\;}U@{\;}U@{\;}U@{}}
\toprule
Graph & Verts & Edges & Avg. Deg. & Max Deg. & $\kappa$ & $\bar{C}$ \\
\midrule
       ca-AstroPh &   17903 &   196972 &  22.0 &   504 & 0.318 & 0.633 \\ 
      email-Enron &   33696 &   180811 &  10.7 &  1383 & 0.085 & 0.509 \\ 
    cond-mat-2005 &   36458 &   171735 &   9.4 &   278 & 0.243 & 0.657 \\ 
           arxiv  &   86376 &   517563 &  12.0 &  1253 & 0.560 & 0.678 \\ 
             dblp &  226413 &   716460 &   6.3 &   238 & 0.383 & 0.635 \\ 
   hollywood-2009 & 1069126 & 56306653 & 105.3 & 11467 & 0.310 & 0.766 \\ 
\midrule 
    fb-Penn94 &   41536 &  1362220 &  65.6 &  4410 & 0.098 & 0.212 \\ 
fb-A-oneyear & 1138557 &  4404989 &   7.7 &   695 & 0.038 & 0.060 \\ 
 fb-A & 3097165 & 23667394 &  15.3 &  4915 & 0.048 & 0.097 \\ 
    soc-LiveJournal1 & 4843953 & 42845684 &  17.7 & 20333 & 0.118 & 0.274 \\ 
\midrule 
      oregon2-010526 &   11461 &    32730 &   5.7 &  2432 & 0.037 & 0.352 \\ 
      p2p-Gnutella25 &   22663 &    54693 &   4.8 &    66 & 0.005 & 0.005 \\ 
         as-22july06 &   22963 &    48436 &   4.2 &  2390 & 0.011 & 0.230 \\ 
         itdk0304 &  190914 &   607610 &   6.4 &  1071 & 0.061 & 0.158 \\ 
\midrule 
          web-Google &  855802 &  4291352 &  10.0 &  6332 & 0.055 & 0.519 \\ 
\midrule
   ff-0.4 &   25000 &    56071 &   4.5 &   112 & 0.283 & 0.412 \\ 
  ff-0.49 &   25000 &   254180 &  20.3 &  1722 & 0.148 & 0.447 \\ 
\bottomrule
\end{tabularx}
\end{table}

\textbf{Collaboration networks} In these networks, the nodes
represent people.  The edges represent collaborations,
either via a scientific publication (ca-AstroPh~\cite{Leskovec-2007-densification},
 cond-mat-2005~\cite{Newman-2001-collaboration},
arxiv~\cite{Bonchi-2011-fast-katz}, dblp~\cite{Boldi-2011-layered,boldi2004-webgraph}),
 an email (email-Enron~\cite{Leskovec-2009-community-structure}), 
or a movie (hollywood-2009~\cite{Boldi-2011-layered,boldi2004-webgraph}).
These networks have large mean clustering coefficients and large global
clustering coefficients.

\textbf{Social networks} The nodes are people again, and
the edges are either explicit ``friend'' relationships 
(fb-Penn94~\cite{Traud-2011-facebook}, fb-A~\cite{Wilson-2009-social-networks}, 
soc-LiveJournal~\cite{Backstrom-2006-group-formation})
 or
observed network activity over edges in a one-year
span (fb-A-oneyear~\cite{Wilson-2009-social-networks}).

\textbf{Technological networks} 
The nodes act in a distributed communication network
either as agents (p2p-Gnutella25~\cite{Matei-2002-gnutella}) or as routers
(oregon2~\cite{Leskovec-2007-densification}, 
as-22july06~\cite{Newman-2006-network}, 
itdk0304~\cite{Caida-network}).  The edges are 
observed communications between the nodes.

\textbf{Web graphs} The nodes are web-pages, and the 
edges are symmetrized links between the 
pages~\cite{Leskovec-2009-community-structure}.

\textbf{Forest fire models} We also explore 
the forest fire graph model~\cite{Leskovec-2007-densification}.
This model has large clustering coefficients and a highly
skewed degree distribution.  The model grows a network
by adding a node at each step. On arrival, a new node
picks a template uniformly
at random from the existing nodes, and then the process
``burns'' around that node with a specified probability.
Burned nodes are then connected to the new node.  It has three parameters:
the size of the initial clique $k$, the probability of following
an edge in the burning process $p$, and the total number of nodes
$n$.  We specify $k=2$ and $n=25000$, and explore two choices for $p$:
short-burning $p=0.4$ and long-burning $p=0.49$.  

\section{\texorpdfstring{Empirical Neighborhood\\ Communities}{Empirical Neighborhood Communities}}
\label{sec:neigh-comm}

To compute the conductance scores for each neighborhood in the graph,
we adapt any procedure to compute all local clustering coefficients.
Most of the work to compute a local clustering coefficient is performed
when finding the number of triangles at the vertex.  We can 
express the number of triangles as $\edges(D_1(v))/2 = (\edges(\ball{1}{v})/2 - 2d_v$,
that is, half the number of edges between immediate neighbors of $v$ (recall
that we double-count edges).
Then $\cut(\ball{1}{v}) = \vol(\ball{1}{v}) - \edges(\ball{1}{v})$.
And so, given the number of triangles, we can compute the cut
assuming we can compute the volume
of the neighborhood.  This is easy to do with any graph structure
that explicitly stores the degrees.  We also note that it's easy
to modify Cohen's procedure for computing triangles with 
MapReduce~\cite{Cohen-2009-graph-twiddling} 
to compute neighborhood conductance scores.
Two extra steps are required: i) map each triangle back to
its constituent nodes, then reduce to find the number of triangles
at each node; and ii) map the joined edge and degree graph
to both vertices in the edge, then sum the degrees of 
the neighborhood in the reduce.

We use the network community plot from 
Leskovec~et~al.~\cite{leskovec2008-community} to show the information on all of the neighborhood communities.
Given the conductance scores from all the neighborhood communities
and their size in terms of number of vertices, we first identify the 
best community at each size.  The network community plot shows
the relationship between best community conductance and community
size on a log-log scale.   In
Leskovec~et~al., they found that these plots had
a characteristic shape for modern information networks:
an initial sharp decrease until the community size reaches
between 100 and 1000, then a considerable rise in the conductance
scores for larger communities.   In our case, neighborhood
communities cannot be any larger than the maximum degree plus
one, and so we mark this point on the graphs.  We always
look at the smaller side of the cut, so no community
can be larger than half the vertices of the graph.  We also
mark this location on the plots.  Each subsequent
figure utilizes this size-vs-conductance plot. Note that we
deliberately attempt to preserve the axes limits across figures to promote
comparisons.  However, some of the figures \emph{do have
different axis limits} to emphasize the range of data.

First, we show these network community plots, or perhaps better 
termed neighborhood community plots for our purposes,
for six of the networks in \Fig{neigh-cond}.  These
figures are representative of the best and worst of
our results.  As a reminder, we make all summary
data and codes available online.  Plots for other
graphs are available on the website given in the introduction.

\begin{figure}[t]
\fontsize{8}{10}\selectfont
\begin{tabular}{@{}c@{}c@{}}
web-Google & itdk0304\\[-0.5ex]
\includegraphics[width=0.5\linewidth]{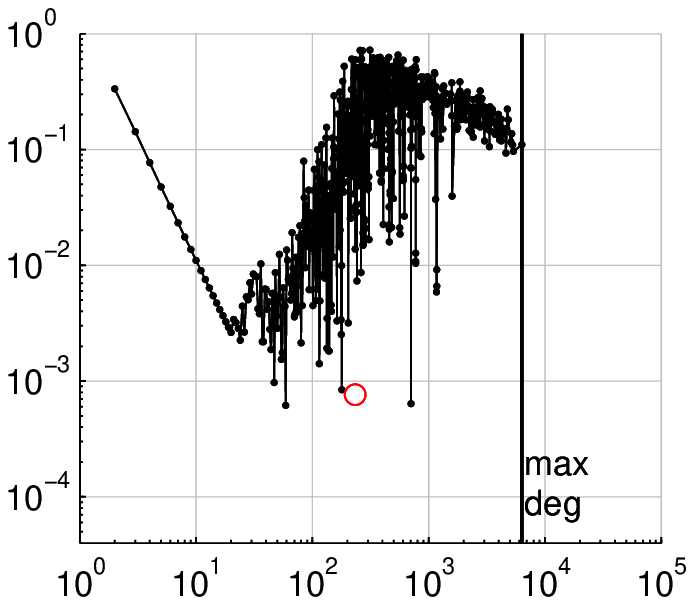} 
& \includegraphics[width=0.5\linewidth]{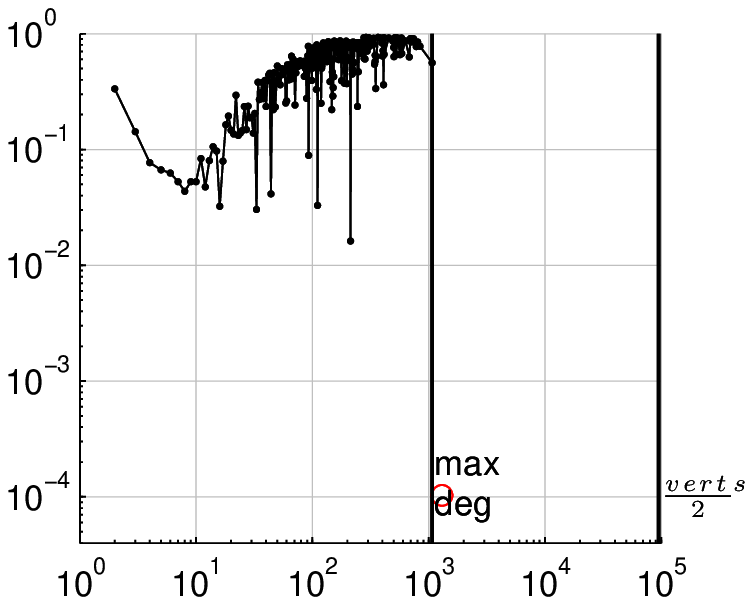} \\
fb-A-oneyear & arxiv\\[-0.5ex]
\includegraphics[width=0.5\linewidth]{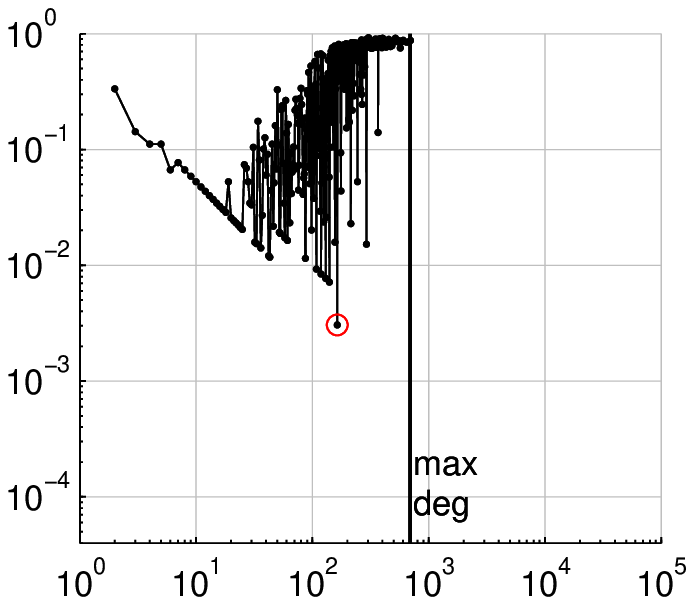} 
& \includegraphics[width=0.5\linewidth]{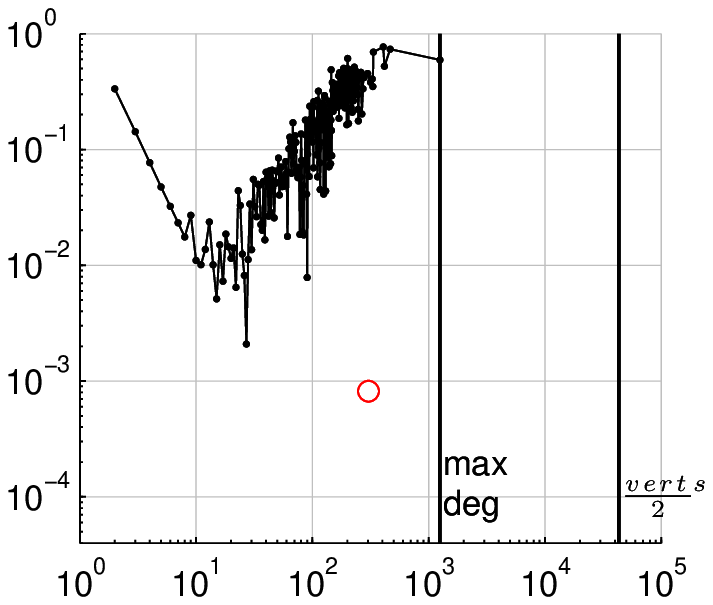} \\
soc-LiveJournal1 & ca-AstroPh\\[-0.5ex]
\includegraphics[width=0.5\linewidth]{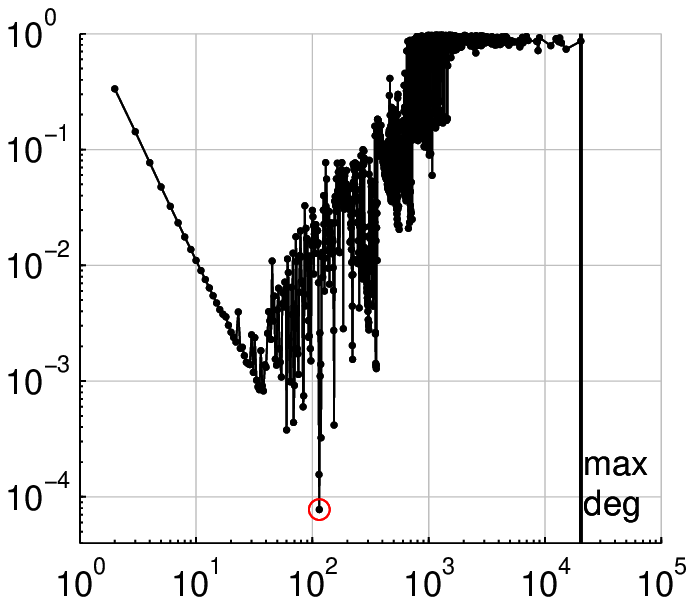} 
& \includegraphics[width=0.5\linewidth]{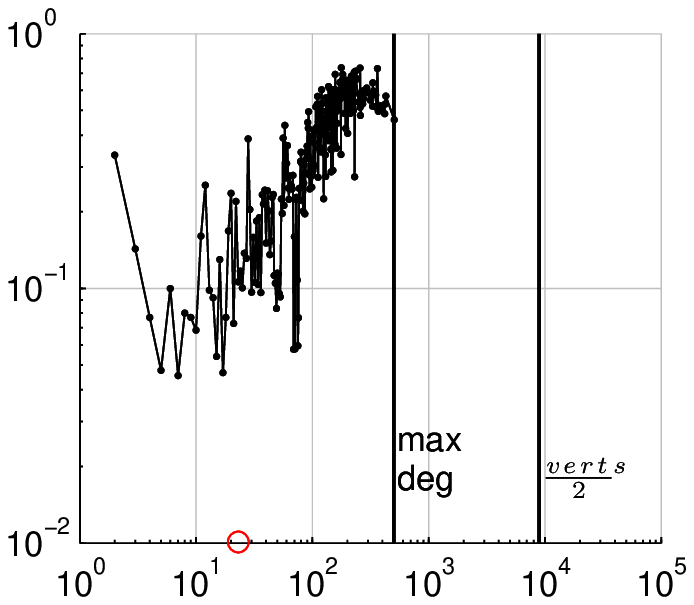} \\
Number of vertices in cluster & Number of vertices in cluster \\
\end{tabular}
\vspace{-7pt}
\caption{The best neighborhood community conductance at each size 
(black) and the Fiedler community (red).  (Note the axis limits on ca-AstroPh).}
\label{fig:neigh-cond}
\end{figure}

The three graphs on the left show cases where a neighborhood
community \emph{is or is nearby} the best Fiedler community 
(the red circle).  The three graphs on the right highlight
instances where the Fiedler community is much better than
any neighborhood community.  We find it mildly surprising
that these neighborhood communities can be as good as the 
Fiedler community.  The structure of the plot for both
fb-A-oneyear and soc-LiveJournal1 is instructive.  Neighborhoods
of the \emph{highest degree} vertices are not community-like --
suggesting that these nodes are somehow exceptional.  In fact,
by inspection of these communities, many of them are nearly
a star graph.  However, a few of the large degree nodes define
strikingly good communities (these are sets with a few
hundred vertices with conductance scores of around $10^{-2}$).  
This evidence concurs with the intuition from \Thm{cond}.  

Note that all of these plots
show the same shape Leskovec~et~al.~\cite{leskovec2008-community}
observed.  Consequently, in the next set of figures,
and in the remainder of the empirical investigation, we compare 
our neighborhood communities against those
computed via the personalized PageRank community scheme
employed in that work and described in Section~\ref{sec:methods}.

Second, \Fig{nvsncp} compares the neighborhood communities
to those computed by sweeping the local personalized PageRank
algorithm over all of the vertices as described by 
Leskovec~et~al.~\cite{leskovec2008-community}.  We also
show the behavior of the whisker communities in this plot as well.
The plot adopts the same style of figure.  The PageRank
communities are in a deep blue color, and the whisker communities
are show in a shade of green.  Here, we see that the
neighborhood communities show similar behavior at small
size scales (less than 20 vertices), but the personalized
PageRank algorithm is able to find larger communities
of smaller, or similar conductance.  In these four
cases (which are representative of all of the remaining 
figures), one of the personalized PageRank communities
was the Fiedler community.

\begin{figure}
\fontsize{8}{10}\selectfont
\begin{tabular}{@{}c@{}c@{}}
cond-mat-2005 & email-Enron\\[-0.5ex]
\includegraphics[width=0.5\linewidth]{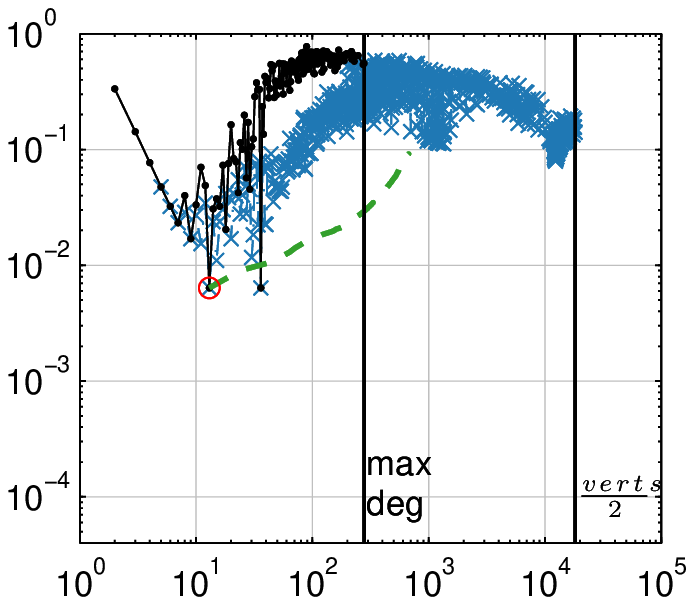} 
& \includegraphics[width=0.5\linewidth]{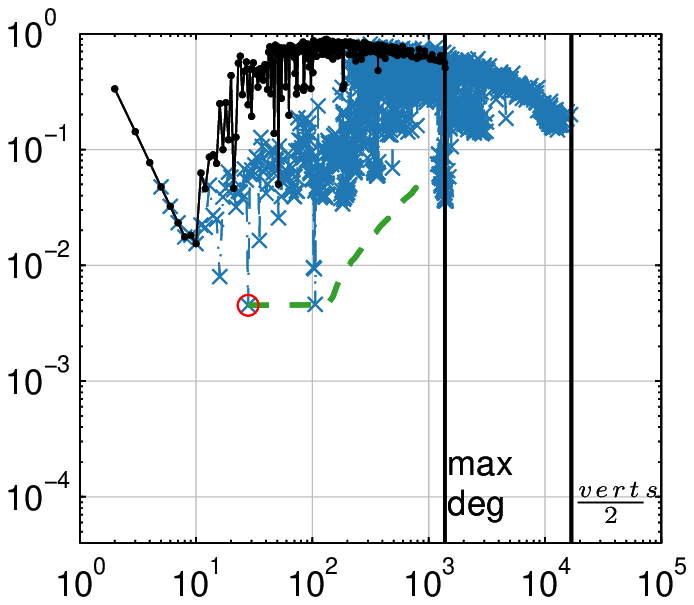} \\
hollywood-2009 & as-22july06\\[-0.5ex]
\includegraphics[width=0.5\linewidth]{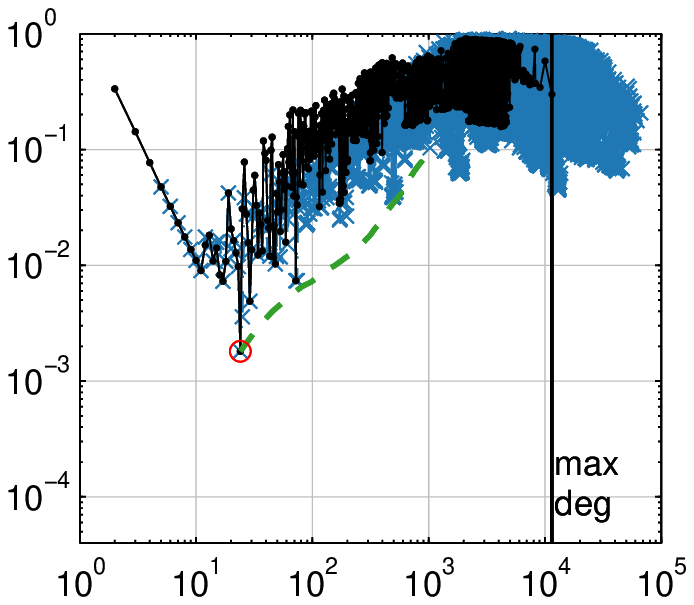} 
& \includegraphics[width=0.5\linewidth]{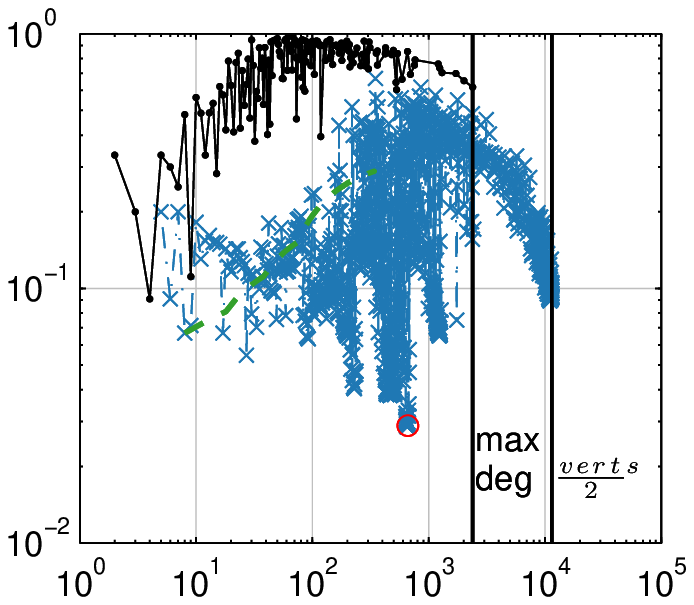} 
\end{tabular}
\vspace{-7pt}
\caption{A comparison of neighborhood communities (black)
personalized PageRank communities (blue),
and whiskers (green).}
\label{fig:nvsncp}
\end{figure}

Based on this observation, we wanted to understand how
the best community identified by a range of algorithms
compares to the neighborhood communities.
This is what our third exploration does.  The results are shown in \Tab{best}.
We computed a set of communities with \algn{metis} by repeatedly
calling the algorithm, asking it to use \emph{more} partitions
each time.  See our online codes for the precise details of 
which partitions were used.

\begin{table*}
\caption{The single best community detected by any of the five methods explore.}
\label{tab:best}
 \begin{tabularx}{\linewidth}{lYXYXYXYXYX}
\toprule
 \textbf{Graph} & \multicolumn{2}{c}{\textbf{Neighborhood}} & \multicolumn{2}{c}{\textbf{Fiedler}} 
   & \multicolumn{2}{c}{\textbf{PageRank}} & \multicolumn{2}{c}{\textbf{Whisker}} & \multicolumn{2}{c}{\textbf{Metis}} \\
& Cond\rlap{.} & Size & Cond\rlap{.} & Size & Cond\rlap{.} & Size & Cond\rlap{.} & Size & Cond\rlap{.} & Size\\
\midrule
       ca-AstroPh & 0.0455 &     7 & 0.0101 &    23 & 0.0101 &    23 & 0.0101 &    23 & 0.0101 &    23 \\
      email-Enron & 0.0154 &    10 & 0.0045 &    28 & 0.0045 &    28 & 0.0045 &    28 & 0.0080 &    16 \\
cond-mat-2005 & 0.0064 &    13 & 0.0064 &    13 & 0.0064 &    13 & 0.0064 &    13 & 0.0154 &    11 \\
           arxiv & 0.0021 &    27 & 0.0008 &   303 & 0.0014 &   304 & 0.0021 &    27 & 0.0021 &    27 \\
             dblp & 0.0038 &    24 & 0.0038 &    25 & 0.0034 &    83 & 0.0038 &    25 & 0.0041 &    17 \\
   hollywood-2009 & 0.0018 &    24 & 0.0018 &    24 & 0.0018 &    24 & 0.0018 &    24 & 0.0018 &    24 \\
\midrule 
              Penn94 & 0.3333 &     2 & 0.1898 &  7191 & 0.1966 &    41 & 0.3333 &     2 & 0.1986 &  6923 \\
fb-A-oneyear & 0.0031 &   164 & 0.0031 &   164 & 0.0031 &   164 & 0.0031 &   164 & 0.0090 &    56 \\
 fb-A & 0.0345 &     8 & 0.0084 &   647 & 0.0084 &   647 & 0.0133 &    38 & 0.0130 &    77 \\
    soc-LiveJournal1 & 0.0001 &   115 & 0.0001 &   115 & 0.0001 &   115 & 0.0001 &   115 & 0.0001 &   115 \\
\midrule 
      oregon2-010526 & 0.1368 &    12 & 0.0467 &   316 & 0.0438 &   318 & 0.1429 &     4 & 0.0553 &  3820 \\
      p2p-Gnutella25 & 0.1429 &    10 & 0.0417 &    24 & 0.0417 &    24 & 0.0588 &     9 & 0.0417 &    24 \\
         as-22july06 & 0.0909 &     4 & 0.0289 &   661 & 0.0286 &   654 & 0.0667 &     8 & 0.0296 &   657 \\
         itdk0304 & 0.0162 &   213 & 0.0001 &  1306 & 0.0002 &  1188 & 0.0001 &  1306 & 0.0046 &   152 \\
\midrule 
          web-Google & 0.0006 &    59 & 0.0008 &   234 & 0.0006 &    59 & 0.0006 &    59 & 0.0006 &    59 \\
\midrule 
   ff-0.4 & 0.0286 &     9 & 0.0004 &   539 & 0.0004 &   539 & 0.0004 &   539 & 0.0004 &   539 \\
  ff-0.49 & 0.0222 &     9 & 0.0067 &    24 & 0.0067 &    24 & 0.0067 &    24 & 0.0105 &    20 \\
\bottomrule
 \end{tabularx}

\end{table*}

By-and-large, the Fiedler cut, personalized PageRank, whiskers,
and \algn{metis} all tend to identify similar communities as the best.
There are sometimes small differences.  An example of a large
difference is in the Penn94 graph, where the Fiedler community
is much larger than the best PageRank community \emph{and} it has
better conductance.  In this comparison, the neighborhood
communities fare poorly.  When they identify a set of
conductance that's as good as the rest, then it is always a whisker
as well.  In the following full section,
we explore using these neighborhood communities as \emph{seeds}
for the PageRank algorithms.  This will let us take 
advantage of the observation that the neighborhood
communities reflect the \emph{shape} of the network
community plot with PageRank communities

\subsection{Empirical Core Communities}

In our theoretical work, we found that large
$k$-cores should always exist in these networks.
These should also look like good communities
and we briefly investigate this idea in
\Fig{cores}.  The standard procedure for computing
$k$-cores is to iteratively remove in degree-sorted
order using a bucket sort~\cite{batagelj2003-cores}.
We additionally store the \emph{step} when each
vertex was removed from the graph.  We sweep over
all cuts induced by this ordering, and for each
$k$-core, store the best conductance community.  These
are plotted in a line that runs from core $1$ to
the largest core in the graph.  The $1$ core is usually
large and a bad-community.  Thus, the line usually starts
towards the upper-right of each network community
plot.  Large cores are actually rather good communities.
Their conductance scores are noticeably higher
than the PageRank communities, but the network
plots seem to have similar shapes.  We'll exploit
this property in the next section.

\begin{figure}
 \begin{tabular}{@{}c@{}c@{}}
Penn94 & dblp\\[-0.5ex]
\includegraphics[width=0.5\linewidth]{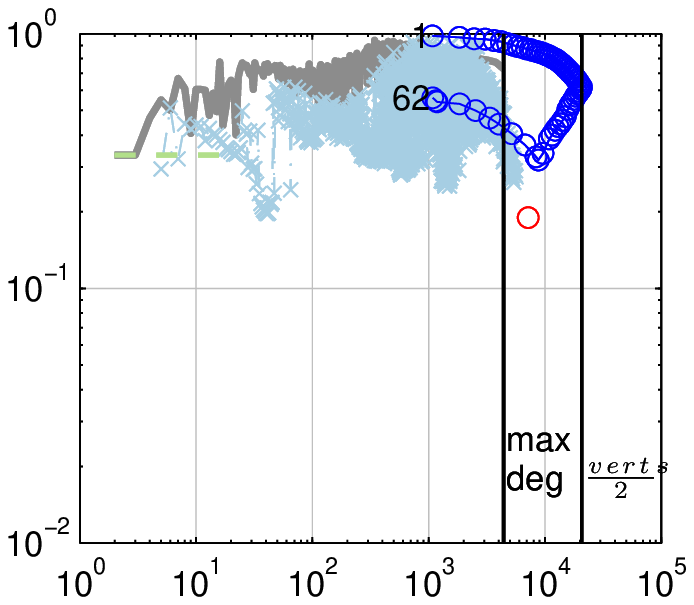} 
& \includegraphics[width=0.5\linewidth]{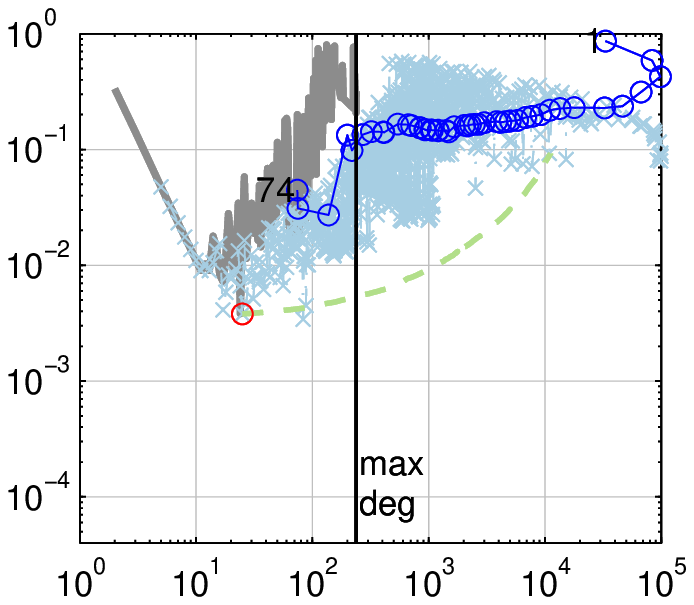} \\
\end{tabular}
\caption{Network community plots with neighborhood
communities (gray), PageRank communities (light
blue), whiskers (green) and k-cores (dark blue).}
\label{fig:cores}
\end{figure}

%
%

\section{Seeded communities} \label{sec:seeds}

Many of the theorems about extracting local communities
from seed sets~\cite{andersen2006-communities,andersen2006-local}
require that the seed set itself be a good community.  This
is precisely what our theoretical results justify for 
neighborhood communities.  Consequently, in this section,
we look at \emph{growing} the neighborhood communities
using the local personalized PageRank community algorithm
from a set of carefully chosen seeds.

One of the key problems with using the personalized PageRank
community algorithms is that finding a good set of seeds is
not easy.  For example, \cite{Gargi-2011-youtube}
describes a way to do this using the most popular videos 
on YouTube.  Such a meaningful heuristic is not always available.
We begin this section by 
empirically showing that there is an easy-to-identify set of neighborhood
communities that are local extrema in the network community plot
of the neighborhood communities.

\begin{figure}[t]
\centering
 \includegraphics[width=0.75\linewidth]{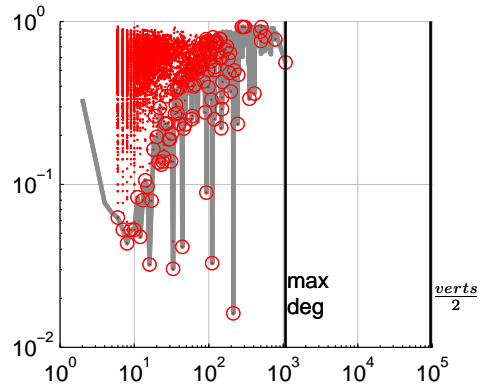}
 \caption{The conductance of locally minimal communities 
in the itdk0304 graph (red). Note that these capture
most of the local minima (downward spikes) in the profile.}
\label{fig:localmin}
\end{figure}
\begin{figure}
\centering
\begin{tabular}{@{}c@{}c@{}}
\includegraphics[width=0.5\linewidth]{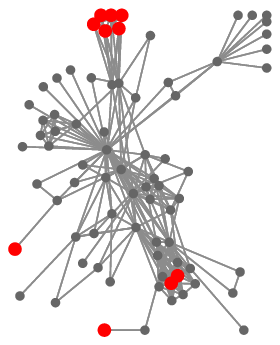} 
& \includegraphics[width=0.5\linewidth]{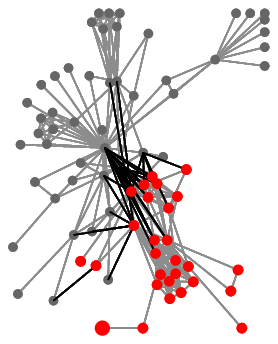} \\
& size=28, cut=31, $\phi$=0.12
\end{tabular}
\caption{(Left) The center vertices of the locally minimal vertex neighborhoods in the
Les Mis\'{e}rables are marked in red.  (Right) The best pageRank grown community
from these vertices matches the best from any seed.}
\label{fig:lesmis-pprgrow}
\end{figure}
First, some quick terminology: we say a neighborhood community is a local minima,
or locally minimal, if the conductance of the neighborhood of a vertex is smaller than the
conductance of any of the adjacency neighborhood communities.  Formally,
\[ \begin{aligned} & \phi(\ball{1}{v}) \le \phi(\ball{1}{w}) \\ & \quad \text{ for all $w$ adjacent to $v$ }  \end{aligned} \]
is true for any locally minimal communities.
We find there are only a small
set of locally minimal communities with more than 6 vertices.  
Shown in \Fig{localmin} are the conductance and sizes of 
the roughly $7000$ communities identified
by this measure for the itdk0304 graph.
Indeed, among all of the graphs with at least $85,000$
vertices, this heuristic picks out about 3\% of the vertices as local minima. In the
worst case, it picked out $100,000$ seeds for soc-LiveJournal1.
Increasing the minimum size to $10$ vertices reduces this down to $50,000$
seeds.
We then use these locally minimal neighborhoods as seed sets for the personalized PageRank
community detection procedure. Each locally minimal
neighborhoods is grown by up to 50-times its volume by solving for communities
using various values of $\sigma$ up to 50.  We also explore growing
the $k$-cores by up to $5$ times their volume.  See \Fig{lesmis-pprgrow}
for the locally minimal communities and the best grown community
from the Les Mis\'{e}rables graph.

\Fig{pprgrow} shows the results.  In these figures, we leave the baseline
neighborhood communities in for comparison. The key insight is that
the dark black line closely tracks the the outline of the pure-PageRank
based community profile.  That profile was computed by using every
vertex in the graph as a seed (although, some vertices were skipped
after 10 other clusters had already visited that vertex).  This effect
is most clearly illustrated by the email-Enron dataset.  The dark black
line identifies almost all of the local minima from the full PageRank
sweep (there are a few it misses).  A weakness of these minimal
seeds for PageRank is that they may not capture the \emph{largest} 
communities.  However, the $k$-core grown communities do seem
to capture this region of the profile (e.g.\ arxiv), although ca-AstroPh
is an exception.  

\begin{figure}
\begin{tabular}{@{}c@{}c@{}}
ca-AstroPh & email-Enron\\[-0.5ex]
\includegraphics[width=0.5\linewidth]{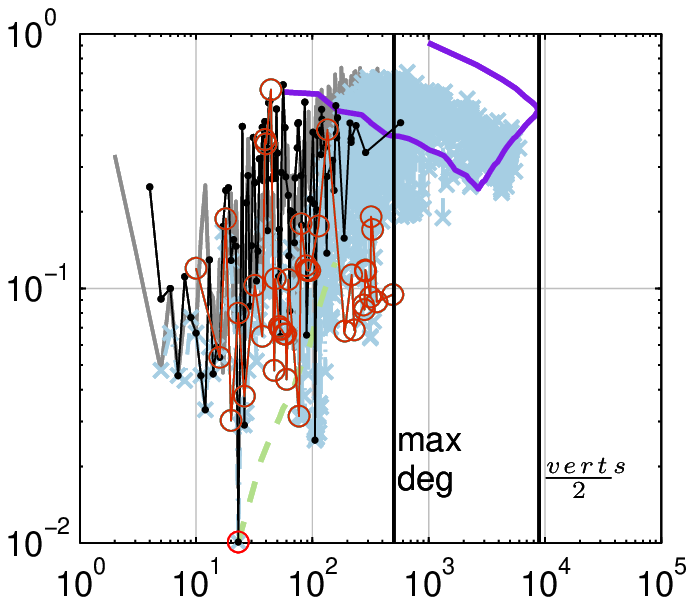} 
& \includegraphics[width=0.5\linewidth]{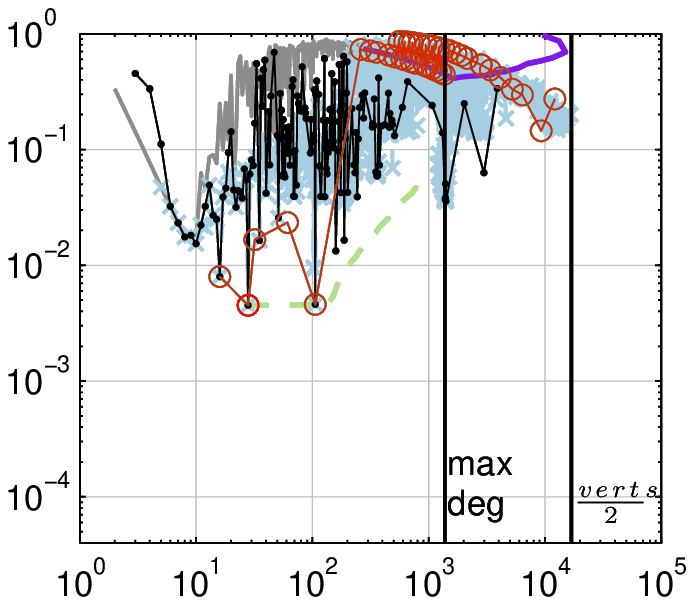} \\
arxiv & fb-A\\[-0.5ex]
\includegraphics[width=0.5\linewidth]{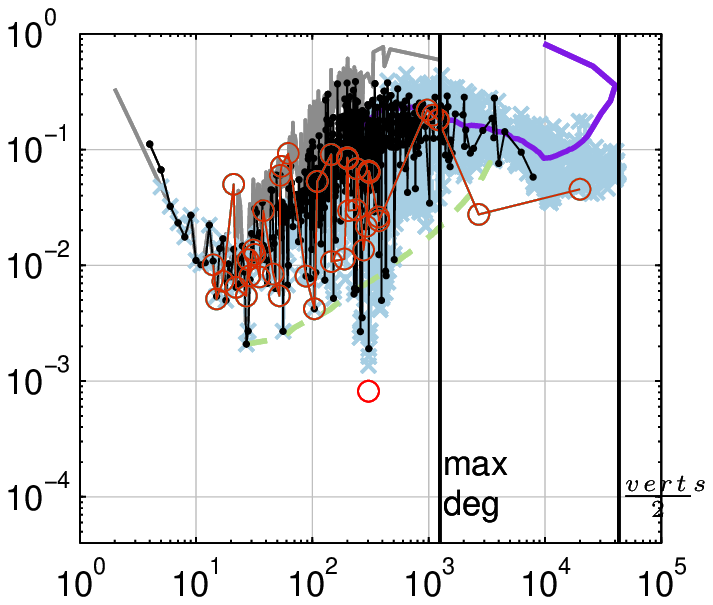} 
& \includegraphics[width=0.5\linewidth]{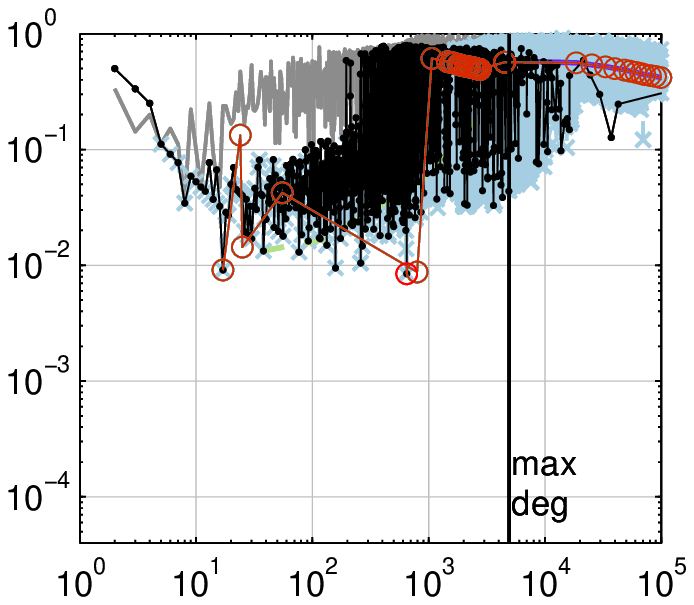} \\
\end{tabular}
\caption{Network community plots with neighborhood
communities (gray), PageRank communities (light
blue), whiskers (green), k-cores (purple), locally
minimal seed PageRank communities (black), and
k-core seeded PageRank communities (red).}
\label{fig:pprgrow}
\end{figure}



\section{Concluding discussions}


We recap.  Community detection is the problem of
finding cohesive collections of nodes in a network.
We formalize this as finding vertex sets with
small conductance.  Modern information networks have
many distinctive properties, including a large clustering
coefficient and a heavy-tailed degree distribution.  
We derive a set of theoretical results that show these
properties imply that such networks will have vertex
neighborhoods that are \emph{themselves} sets of small
conductance.  Although our theoretical bounds are weak, they
suggest the following experiment: measure the conductance
of vertex neighborhoods.

Algorithms to compute \emph{all such conductance scores}
are easy to implement by 
modifying a routine for computing local clustering coefficients.
We evaluate these communities on a set of real-world networks.
In summary, our results support the idea that there are
many neighborhood communities which are \emph{good communities}
in a conductance sense.   
They may be smaller than desired, however.

We next investigate finding a set of locally minimal 
communities.  These communities represent the 
\emph{best of the neighborhood}.  We find that these locally
minimal communities, of which there are many fewer than
vertices in the graph (usually around 3\%), capture the
local minimal in the network community profile plot.  
More importantly, they  
can be enlarged using a local personalized PageRank 
community detection procedure.  Afterwards, the profile
of these ``grown'' neighborhoods is strikingly close 
to the profile of the PageRank communities when seeded
with all vertices individually.  While we do not discuss
timing due to the variability in the quality of implementations,
this later procedure is much faster in our experiments.

These findings have implications for future studies
in community detection.  One explanation for the results
with the PageRank seeds is that vertex neighborhoods form 
the core of \emph{any} good community in the network.
We highlight this as a direction for future research into
neighborhood communities.  

\bibliographystyle{abbrv}
\bibliography{bib}  

\end{document}